\documentclass[11pt]{article}
\usepackage{parskip}
\usepackage{CJK}
\usepackage{amsmath,amsthm,amssymb}
\usepackage{graphicx}
\usepackage[dvipsnames]{xcolor}
\usepackage[hidelinks]{hyperref}
\definecolor{MaterialsCoral}{cmyk}{0, 0.75, 0.5, 0}
\definecolor{MaterialsSky}{cmyk}{0.6, 0, 0, 0}
\definecolor{MaterialsSun}{cmyk}{0, 0.2, 0.6, 0.05}
\definecolor{MaterialsGrass}{cmyk}{0.65, 0, 0.3, 0}

\urlstyle{same}
\usepackage{doi}
\usepackage{fullpage,palatino}
\usepackage{enumitem}
\setlist{noitemsep,leftmargin=*}
\newtheorem{theorem}{Theorem}
\newtheorem{lemma}{Lemma}
\newcommand{\Xomit}[1]{}
\newcommand{\leaf}[1]{L(#1)}
\newcommand{\algo}{\text{tree}(G)}
\newcommand{\rank}[1]{r(#1)}

\title{A Simple $2$-Approximation for Maximum-Leaf Spanning
  Tree\thanks{See~\cite{Liao01} for a preliminary version of this work, which was conducted independently of~\cite{Solis-Oba98,Solis-ObaBL17}.}}

\author{I-Cheng Liao\thanks{Department of Computer Science
and Information Engineering, National Chung-Cheng University, Chia-Yi, Taiwan.}
\and
  Hsueh-I Lu\thanks{Corresponding author. Email:
    hil@csie.ntu.edu.tw. Department of Computer Science and
    Information Engineering, National Taiwan University, Taipei, Taiwan. Research of this author is supported by
Grant 110--2221--E--002--075--MY3 of National Science and Technology Council (NSTC), formerly known as Ministry of Science and Technology (MOST).}}

\begin{document}
\begin{CJK*}{UTF8}{bkai}
\maketitle
\begin{abstract}
For an $m$-edge connected simple graph $G$, finding a spanning tree of
$G$ with the maximum number of leaves is MAXSNP-complete.  The problem
remains NP-complete even if $G$ is planar and the maximal degree of
$G$ is at most four.  Lu and Ravi gave the first known polynomial-time
approximation algorithms with approximation factors $5$ and $3$.
Later, they obtained a $3$-approximation algorithm that runs in near-linear
time.  The best known result is Solis-Oba, Bonsma, and Lowski's
$O(m)$-time $2$-approximation algorithm.  We show an alternative
simple $O(m)$-time $2$-approximation algorithm whose analysis is
simpler. 
This paper is dedicated to the cherished memory of our dear friend, Professor Takao Nishizeki.
\end{abstract}

\section{Introduction}
\label{section:section1}
We dedicate this paper to honoring the enduring contributions of our beloved friend,
Professor Takao Nishizeki, to the field of graph algorithms. Throughout his distinguished career, Professor Nishizeki contributes many classic results to the field, including interesting work on approximation algorithms~\cite{ChibaNS82,ItoDZN08,NishizekiAW83,ObataN15} and spanning trees~\cite{KawabataN14,MiuraAN05,MiuraTNN99,TakamizawaNS80}. We remember him with deep affection and gratitude for his remarkable achievements.

For an $n$-vertex $m$-edge connected simple graph $G$, finding a
spanning tree of $G$ with the maximum number of leaves is NP-complete
even if $G$ is planar and has maximum degree at most
four~\cite{GareyJ79}.  Moreover, the problem is
MAXSNP-complete~\cite{GalbiatiMM94}, implying that it does not admit a
polynomial-time approximation scheme unless
$\text{NP}=\text{P}$~\cite{AroraLMSS98,
AroraS98,PapaYanna88}.  The
problem finds applications in communication networks, circuit
layouts, and computer graphics~\cite{Diaz-GutierrezBGP06,Dijkstra74,Tchuente81}.
See~\cite{Rosamond2016} for an up-to-date survey on the maximum leaf spanning tree problem.

Lu and Ravi~\cite{LuR92} gave
the first polynomial-time approximation algorithms for the problem
based on local search. Their $3$-approximation (respectively,
$5$-approximation) algorithm runs in $O(m^2n^3)$ (respectively,
$O(mn^2)$) time.  They later gave a new $3$-approximation algorithm
that runs in $O(m\cdot\alpha(m,n))$ time~\cite{LuRavi96}.  Solis-Oba,
Bonsma, and Lowski~\cite{Solis-Oba98,Solis-ObaBL17} improved their
result by giving an $O(m)$-time $2$-approximation algorithm.
Their
algorithm has multiple phases of loops.  Their proof for the
approximation ratio is also somewhat involved.  We give an alternative
$O(m)$-time $2$-approximation algorithm that is simpler:
(1) Our
algorithm has only one simple loop, easily implementable to
run in linear time using basic data structures like arrays and linked
lists.  (2) Our proof for the approximation ratio is shorter.
\begin{theorem}
\label{theorem:theorem1} 
Algorithm $\algo$ is an $O(m)$-time $2$-approximation algorithm for
the maximum-leaf spanning tree problem.
\end{theorem}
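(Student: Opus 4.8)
The plan is to establish the two assertions of Theorem~\ref{theorem:theorem1} separately: that $\algo$ runs in $O(m)$ time, and that the tree it returns has at least half as many leaves as an optimal spanning tree. For the running-time bound I would argue that the single loop performs work proportional to the degrees of the vertices it touches: each vertex is turned from a leaf into an internal vertex at most once, and when this happens its adjacency list is scanned only a constant number of times, so the total cost telescopes to $\sum_{v} O(\deg_G(v)) = O(m)$. Maintaining the status of each vertex (untouched, current leaf, or internal) in an array and keeping the set of expandable leaves in a linked list makes each individual bookkeeping step $O(1)$ amortized, with any remaining attachment of still-unreached vertices charged to edges scanned once. This part I expect to be routine.

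The heart of the proof is the approximation ratio. Write $T=\algo$ and let $T^{*}$ be a spanning tree of $G$ with the maximum number $\mathrm{OPT}=|\leaf{T^{*}}|$ of leaves; the goal is $\mathrm{OPT}\le 2\,|\leaf{T}|$. I would base the comparison on the elementary identity that, for any tree $F$ with at least two vertices,
\[
|\leaf{F}| \;=\; 2+\sum_{v:\ \deg_F(v)\ge 3}\bigl(\deg_F(v)-2\bigr),
\]
so that leaves are produced only by branch vertices while degree-$2$ vertices are ``wasted.'' Applying this identity to both $T$ and $T^{*}$ reduces the target to bounding the total excess degree $\sum_{u}(\deg_{T^{*}}(u)-2)$ of the branch vertices of the optimum by roughly twice the corresponding sum for $T$. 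The natural tool is a charging scheme: I would distribute, for each unit of excess degree of a branch vertex of $T^{*}$, one unit of charge onto the internal vertices of $T$, and then show that no internal vertex of $T$ accumulates more than two units.

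To make the charging work I would exploit the local optimality guaranteed at termination: the loop keeps promoting a leaf to an internal vertex only when doing so creates at least two new leaves, and it stops precisely when no such beneficial expansion remains, so the returned $T$ is leaf-maximal in this local sense and its internal vertices dominate the remaining structure. The crux, and the step I expect to be the main obstacle, is to transfer this purely local guarantee into a global bound against $T^{*}$: I must account for the vertices that are high-degree branch vertices of $T^{*}$ yet are leaves or degree-$2$ vertices of $T$, and show that the greedy rule prevents $T$ from squandering vertices on long degree-$2$ chains relative to the branching of $T^{*}$. Concretely, I would argue that each expansion performed by the algorithm, which by the rule produced at least two new leaves, can absorb the charge of at most two optimal leaves, so that every optimal branch-edge is routed to a nearby internal vertex of $T$ without overloading it. Once this local accounting is verified, summing over all internal vertices of $T$ yields $\mathrm{OPT}\le 2\,|\leaf{T}|$ and completes the proof.
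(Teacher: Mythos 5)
Your running-time argument is essentially the same as the paper's (arrays plus linked lists, each edge scanned $O(1)$ times), and that part is fine. The approximation-ratio argument, however, has a genuine gap: the step you yourself flag as ``the crux'' is exactly the hard part of the theorem, and you assert rather than prove it. You never define the charging scheme concretely, never say which internal vertex of $T$ receives the charge for a given unit of excess degree of $T^{*}$, and never prove the ``at most two units per vertex'' bound. Worse, the premise on which the scheme rests is false for this algorithm: it is not true that ``the loop keeps promoting a leaf to an internal vertex only when doing so creates at least two new leaves.'' The algorithm only prefers such expansions (the $W_2(T)$ case); when none is available it deliberately performs expansions that create one or even zero new leaves (the $W_1(T)$ and $W_0(T)$ cases), and the $W_0$ rule explicitly grows long degree-$2$ paths. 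So the returned $T$ is not leaf-maximal in your local sense, and the degree-$2$ chains you hope to exclude genuinely occur.

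The paper's proof shows why those chains are harmless, and that is the substantive content your proposal is missing. It assigns each vertex a rank reflecting the expansion order, lets $U$ be the set of vertices with unique ranks (exactly the chain vertices produced by $W_0$-expansions), and proves structural lemmas (each vertex has at most one neighbor of strictly higher rank; chain vertices sandwiched in $U$ have degree $2$ in $G$; etc.). From these it derives two inequalities: $n-|U|\le 2|\leaf{T}|+k-2$, where $k$ is the number of large components of the rank-forest $F$, and, for \emph{every} spanning tree $R$, $|\leaf{R}|\le n-|U|-k+1$ --- i.e., the vertices of $U$ are forced to be non-leaves in essentially any spanning tree, not just in $T$. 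Combining the two gives the factor $2$. To repair your proposal you would need an argument of comparable strength showing that the degree-$2$ chains of $T$ cannot be converted into leaves by the optimum; the local greedy rule alone does not supply this, so as written the proof does not go through.
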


We follow a similar approach to the one introduced by Solis-Oba, Bonsma, and Lowski~\cite{Solis-Oba98,Solis-ObaBL17}: In each round, the algorithm selects a vertex $u$ on the current tree and expands the tree by adding all neighbors of $u$ that are not already in the tree, directly linking them to $u$. The choice of $u$ is based on the number of its neighbors outside the current tree. Our algorithm expands the tree in a greedy manner, aiming to maximize the number of leaves in each round. If a round cannot increase the number of leaves (i.e., each leaf has at most one neighbor outside the tree), the algorithm looks ahead one round and attempts to maximize the number of leaves increased in two consecutive rounds. When no suitable expansion vertices can be found for two consecutive rounds, the algorithm employs a ``depth-first'' expansion strategy to simplify the analysis of the approximation ratio.

\Xomit{
The general approach of our algorithm is similar to 
the algorithm by Solis-Oba, Bonsma, and Lowski~\cite{Solis-Oba98,Solis-ObaBL17}:
Each round chooses a vertex $u$ on the current tree and expands the tree at  $u$ (i.e., adds all neighbors of $u$ outside the current tree by directly linking them to $u$).
The choice of the vertex $u$ is based on the
number of $u$'s neighbors outside of the current tree. 
Our algorithm grows the current tree in a greedy manner.
Intuitively, each round tries to increase the number of leaves of the current tree 
as much as possible. For an iteration that the number of leaves of the current tree cannot increase (i.e., each leaf of the current tree has at most one neighbor outside the current tree), our algorithm looks ahead one iteration and
tries to maximize
the number of leaves of the current tree to be increased in two consecutive iterations. 
The ``depth-first'' expansion when
 no choice of the expansion vertices for
two consecutive
rounds simplifies the analysis of the approximation ratio.
}

The maximum-leaf spanning tree problem has been extensively studied in
the literature.  Most of the early work focused on finding spanning
trees with many leaves in graphs with minimum degree at least $d$ for
some $d\geq 3$. For such graphs, good lower bounds on the number of
leaves achievable in a spanning tree 
have been
derived
in~\cite{BonsmaZ12,BousquetI0MOSW20,DingJS01,GriggsKS89,KleitmanW91,PayanTX84,Storer81}. There has also
been work on polynomial-time solutions to the problem of determining
if a given graph has a spanning tree with at least $k$ leaves for a
fixed $k$. The first such algorithm, running in $O(n^2)$ time,  was due to Fellows and
Langston~\cite{FellowsL92}. 
Bodlaender~\cite{Bodlaender93} improved the running time to $O(m)$.

Section~\ref{section:section2} describes our algorithm.
Section~\ref{section:section3} analyzes the approximation ratio.
Section~\ref{section:section4} shows a linear-time
implementation of our algorithm.  Section~\ref{section:section5}
concludes the paper.

\section{The algorithm}
\label{section:section2}
All graphs in this paper are undirected and simple, i.\,e., having no
multiple edges and self-loops.  Let $H$ be a graph.  Let $V(H)$
(respectively, $E(H)$) denote the vertex (respectively, edge) set of
$H$.  Let $d_H(u)$ with $u\in V(H)$ denote the number of neighbors
of $u$ in $H$.  
Let $L(H)$ (respectively, $V_2(H)$ and $V_0(H)$)
consist of each vertex $u$ with $d_H(u)=1$ (respectively, $d_H(u)\geq 2$
and $d_H(u)=0$).
A vertex $u\in V(H)$ is a {\em leaf} of $H$ if $u\in L(H)$.

Let $G$ be the input $n$-vertex $m$-edge connected graph.  Assume without
loss of generality that $G$ contains a vertex $a$ with $d_G(a)\geq
2$.  Let $T$ be a tree of $G$.  For a vertex $u$ of $T$, let $V_T(u)$
consist of the vertices $v\in V(G)\setminus V(T)$ with $uv\in E(G)$
and let $E_T(u)$ consist of the edges $uv$ of $G$ with $v\in
V_T(u)$. If $|V_T(u)|=1$, then let $v_T(u)$ denote the vertex in
$V_T(u)$.  
\begin{itemize}
\item Let $W_2(T)$ consist of each vertex $u$ of $T$ with $|V_T(u)|\geq
  2$.

\item Let $W_1(T)$ consist of each vertex $u$ of $T$ with $|V_T(u)|=1$
  and $|V_{T\cup E_T(u)}(v_T(u))|\geq 2$.

\item Let $W_0(T)$ consist of each vertex $u$ of $T$ with $|V_T(u)|=1$
  and $|V_{T\cup E_T(u)}(v_T(u))|\leq 1$.
\end{itemize}

\begin{figure}[t]
\hrule
\medskip
\begin{tabbing}
\qquad\=Algorithm $\algo$:\\
\>\qquad\=Let $T$ be the initial tree of $G$ consisting of a vertex
            $a$ of $G$ with $d_G(a)\geq 2$.\\
  
\>      \>Repeat the following steps until $V(T)=V(G)$:\\
\>      \>\qquad\=If $W_2(T)\ne\varnothing$, then\\
\>      \>      \>\qquad\=let $u$ be an arbitrary vertex in $W_2(T)$\\
\>      \>      \>else if $W_1(T)\ne\varnothing$, then\\
\>      \>      \>      \>let $u$ be an arbitrary vertex in $W_1(T)$\\
\>      \>      \>else\\
\>      \>      \>      \>let $u$ be the vertex in $W_0(T)$ that joins $V(T)$ most recently.\\
\>      \>      \>Expand $T$ at $u$ by letting $T=T\cup E_T(u)$.\\
\>      \>Return $T$.
\end{tabbing}
\hrule
\caption{Our algorithm on a connected graph $G$.}
\label{figure:figure1}
\end{figure}
Our algorithm is as in Figure~\ref{figure:figure1}.  That is,
starting with the initial tree $T=\{a\}$ of $G$, we iteratively expand
$T$ until $V(T)=V(G)$.  Each iteration expands $T$ at a vertex $u$ of
$T$ with $|V_T(u)|\ne 0$ (which has to be a leaf of $T$ except when  $u=a$) by letting
$T=T\cup E_T(u)$. Hence, the tree returned by the algorithm is a
spanning tree of $G$.  The
vertex at which $T$ is expanded is chosen from $W_2(T)$, $W_1(T)$, and
$W_0(T)$ in order.  Thus, if $T$ is to be expanded at a vertex in
$W_i(T)$ with $i\in\{0,1\}$, then each $W_j(T)$ with $i<j\leq 2$ is
empty.  Moreover, the vertices in $W_0(T)$ are chosen in the reversed
order of their joining $V(T)$. As a result, once $T$ is about to be
expanded at a vertex $u\in W_0(T)\cup W_1(T)$, then algorithm grows a
path for $T$ from $u$ by repeatedly letting $T=T\cup \{uv_T(u)\}$ and
$u=v_T(u)$ until $|E_T(u)|\ne 1$.  If the halting condition is
$|E_T(u)|=0$ (respectively, $|E_T(u)|\geq 2$), then the grown path
ends at a leaf (respectively, a vertex with degree at least three)
in the final tree returned by the algorithm.
The chosen order of the vertices at which $T$
is expanded is crucial in analyzing the approximation ratio in the
next section.  An example is shown in Figure~\ref{figure:figure2}(a),
where vertex $i$ is the $i$-th vertex at which $T$ is expanded.  

\begin{figure}
\centerline{\includegraphics[width=12cm]{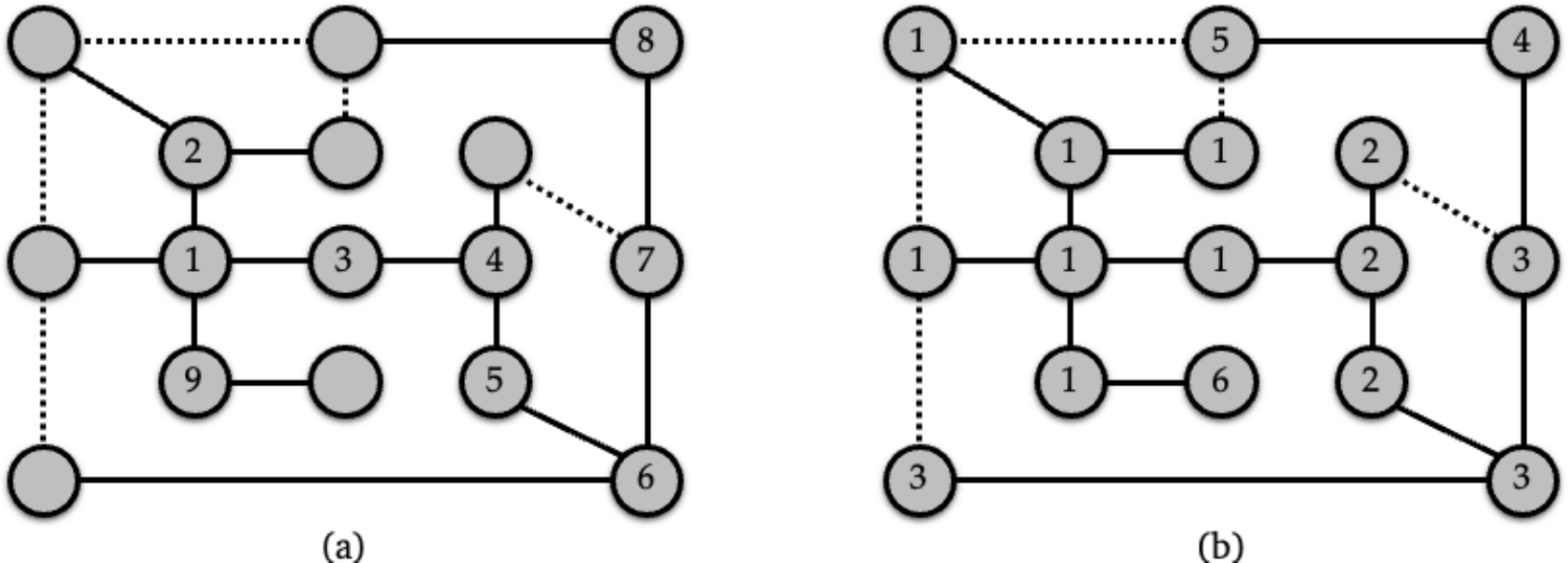}}
\caption{An example of the spanning tree $T$ returned by our
  algorithm. The edges of $T$ are represented by the solid lines. (a)
  Vertex $i$ is the $i$-th expanded vertex. The vertices with no
  labels are the leaves of the returned spanning tree. (b) The label
  of each vertex is its rank according to the expansion order given in
  (a).}
\label{figure:figure2}
\end{figure}

\section{The approximation ratio}
\label{section:section3}
Throughout the section, let $T$ denote the tree returned by our
algorithm rooted at the vertex $a$ of the initial tree.  For each
vertex $v$ of $T$ other than $a$, let $p(v)$ denote the parent of $v$
in $T$.  For each vertex $u$ of $T$ with $d_T(u)\geq 2$, let $T_u$ be
the subtree of $T$ when $T_u$ is about to be expanded at~$u$ and let 
$T^u=T_u\cup E_{T_u}(u)$.
Thus,
$u=p(v)$ if and only if $v\in V_{T_u}(u)$, i.e., $uv$ joins $T$ via
expanding $T_u$ at~$u$.  Define the {\em rank} $r(v)$ of a each $v\in
V(G)$ as follows.  Let $\rank{a}=1$.  For each edge $uv$ of $T$ with 
$u=p(v)$, let
\[
\rank{v}=\left\{
\begin{array}{ll}
\rank{u}&\text{if $u\in W_2(T_u)$}\\
\displaystyle1+\max_{w\in V(T_u)} \rank{w}&\text{otherwise}.
\end{array}
\right.
\]
See Figure~\ref{figure:figure2}(b) for an example.  If $uv$ is an edge of
$G$ with $r(u)<r(v)$, then $v\notin V(T_u)$. That is, when the tree is
about to be expanded at $u$, the tree does not contain any vertex
whose rank is higher than $r(u)$.  Thus, for an edge $uv$ of $G$ with
$r(u)<r(v)$, if $uv\in E(T)$, then $u=p(v)$ and $u\in W_0(T_u)\cup
W_1(T_u)$; otherwise, $u$ is a leaf of $T$ with $r(p(u))=r(u)$.  Let
$U$ consist of the vertices with unique ranks.  The $U$ in  Figure~\ref{figure:figure2}(b) consists of the vertices of ranks $4$, $5$, and $6$. If $u=p(v)$, then the
definition of our algorithm implies
\begin{itemize}
\item
  $u\in W_2(T_u)$ if and only if $\rank{u}=\rank{v}$,
  
\item
  $u\in W_1(T_u)$ if and only if $\rank{u}<\rank{v}$ and $v\notin U$,
  and
  
\item
  $u\in W_0(T_u)$ if and only if $\rank{u}<\rank{v}$ and $v\in U$.
\end{itemize}
\begin{lemma}
\label{lemma:lemma1}
Let $uvw$ is a path of $G$ with
 $\{u,v\}\subseteq U$ and $r(u)<r(v)<r(w)$, then 
$d_G(v)=2$.
\end{lemma}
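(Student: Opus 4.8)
The plan is to show that the only neighbors of $v$ in $G$ are $u$ and $w$. First I would pin down where $u$ and $v$ sit in the tree. Since $u\in U$, the rank $r(u)$ is unique, so for the edge $uv$ (with $r(u)<r(v)$) the ``otherwise'' branch of the dichotomy cannot occur: it would force $u$ to be a leaf of $T$ with $r(p(u))=r(u)$, contradicting the uniqueness of $r(u)$. (Here $p(u)$ exists, since the root $a$ is expanded from $W_2$, whence its children share rank $1$ and $a\notin U$, so $u\ne a$.) Hence $uv\in E(T)$ and $u=p(v)$. Because $v\in U$ as well, the characterization of the $W$-sets gives $u=p(v)\in W_0(T_u)$; and because $u\in U$ with parent $p(u)$, the same characterization gives $p(u)\in W_0(T_{p(u)})$. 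In particular, each of the expansions at $p(u)$ and at $u$ happens while $W_2$ and $W_1$ are empty.

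Next I would split the neighbors of $v$ according to the tree $T^u=T_u\cup E_{T_u}(u)$ obtained right after $v$ joins. Since $u\in W_0(T_u)$, its unique neighbor outside $T_u$ is $v$, so $E_{T_u}(u)=\{uv\}$ and the $W_0$ condition for $u$ says that $v$ has at most one neighbor outside $T^u$. As $r(w)>r(v)>r(u)$, the vertex $w$ lies outside $T^u$, so $w$ is exactly that one outside neighbor; every other neighbor of $v$ therefore lies in $V(T_u)$. It then remains to prove that $u$ is the only neighbor of $v$ inside $T_u$, which will give that the neighbors of $v$ are precisely $u$ and $w$, i.e.\ $d_G(v)=2$.

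The core of the argument—and the step I expect to be the main obstacle—rules out a hypothetical neighbor $y$ of $v$ with $y\in V(T_u)$ and $y\ne u$. Since $W_2(T_u)=\varnothing$ and $v\notin V(T_u)$, such a $y$ has $v$ as its unique neighbor outside $T_u$, so $y\in W_0(T_u)$ (as $W_1(T_u)=\varnothing$). Because the algorithm expands the most recently joined vertex of $W_0(T_u)$ and it selects $u$, the vertex $y$ must have joined before $u$; as the expansion at $p(u)$ adds only $u$, this forces $y\in V(T_{p(u)})$. I would then look back at the earlier tree $T_{p(u)}$: there too $W_2$ and $W_1$ are empty and $v\notin V(T_{p(u)})$, so $y\in W_0(T_{p(u)})$ with $v$ as its target. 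The $W_0$ membership of $y$ at $T_{p(u)}$ requires $v$ to have at most one neighbor outside $V(T_{p(u)})\cup\{v\}$; but $v$ is adjacent to both $u$ and $w$, and neither $u$ (which enters $T$ only at the expansion of $p(u)$) nor $w$ (of strictly larger rank) belongs to $V(T_{p(u)})$. Thus $v$ has at least two such outside neighbors, the desired contradiction. Hence no such $y$ exists, the neighbors of $v$ are exactly $u$ and $w$, and $d_G(v)=2$.
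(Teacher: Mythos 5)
Your proof is correct and takes essentially the same route as the paper's: both establish $u=p(v)\in W_0(T_u)$ and $p(u)\in W_0(T_{p(u)})$, use the $W_0$-condition at $u$ to force any third neighbor of $v$ into $V(T_{p(u)})$, and then contradict $W_1(T_{p(u)})=W_2(T_{p(u)})=\varnothing$ because that neighbor sees $v$ with the two outside neighbors $u$ and $w$. Your detour through the most-recently-joined tie-breaking rule is a harmless (and not strictly necessary) substitute for the paper's observation that the third neighbor has rank below $r(u)$ and hence already lies in $T_{p(u)}$.
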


\begin{proof}
Assume for contradiction that there is a neighbor $x$ of $v$ in $G$ other than $u$ and $w$.
By $\{u,v\}\subseteq U$ and $r(u)<r(v)$, we have $r(u)=r(v)-1$ and
$u\in W_0(T_u)$. Therefore, $W_1(T_u)= W_2(T_u)=\varnothing$, implying
$r(x)<r(u)$. 
By $\{u,w\}\subseteq V_{T'\cup \{xv\}}(v)$, we have
$x\in W_2(T_{p(u)})\cup W_1(T_{p(u)})$, contradicting $p(u)\in W_0(T_{p(u)})$.
\end{proof}

\begin{lemma}
\label{lemma:lemma2}
Each vertex $u$ of $G$ has at most one neighbor $v$ in $G$ with $r(u)<r(v)$.
\end{lemma}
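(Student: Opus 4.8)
The plan is to bound the number of higher-rank neighbors of $u$ by splitting on whether the algorithm expands $u$ (so that $u$ becomes an internal vertex of $T$) or $u$ stays a leaf of $T$. Throughout I would use the characterization established just before Lemma~\ref{lemma:lemma1}: if $uv\in E(G)$ with $r(u)<r(v)$, then either $uv\in E(T)$, $u=p(v)$, and $u\in W_0(T_u)\cup W_1(T_u)$; or $uv\notin E(T)$ and $u$ is a leaf of $T$ with $r(p(u))=r(u)$. I would also use that when $T$ is about to be expanded at a vertex $w$ the current tree contains no vertex of rank exceeding $r(w)$, so each expansion adds children of rank $r(w)$ (when $w\in W_2(T_w)$) or a single child of rank $r(w)+1$ (when $w\in W_0(T_w)\cup W_1(T_w)$). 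Note this dichotomy already separates the two types of higher-rank neighbor: a tree neighbor forces $u$ to be internal, while a non-tree neighbor forces $u$ to be a leaf.

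First I would handle the internal case, where $u$ is expanded exactly once. Every neighbor $v$ of $u$ with $r(v)>r(u)$ is absent from $T_u$ (its rank exceeds the maximum rank present when $u$ is expanded), so $v\in V_{T_u}(u)$ and $v$ becomes a child of $u$ at that single expansion. If $u\in W_2(T_u)$, every such child receives rank $r(u)$, so in fact no higher-rank neighbor can exist; if $u\in W_0(T_u)\cup W_1(T_u)$, then $|V_{T_u}(u)|=1$, giving exactly one child and hence at most one higher-rank neighbor. Either way $u$ has at most one higher-rank neighbor.

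The remaining, and main, case is that $u$ is a leaf with two non-tree neighbors $v_1,v_2$ of higher rank; here I would derive a contradiction. Writing $R=r(u)$, the characterization gives $r(p(u))=R$, so $p(u)\in W_2(T_{p(u)})$ and $u$ joins $T$ as a rank-$R$ child when $p(u)$ is expanded. The key structural fact I would establish is that ranks are created in increasing, contiguous blocks: the maximum rank present in the tree is nondecreasing and increases by exactly one only at a $W_0/W_1$ expansion, which is performed only when $W_2$ is empty; hence every vertex of rank at most $R$ joins $T$ before any vertex of rank exceeding $R$. Consequently $v_1$ and $v_2$ remain outside $T$ throughout the entire rank-$R$ block, which ends at the first moment $T'$ within the block at which $W_2(T')=\varnothing$ (the tree just before the $W_0/W_1$ expansion that raises the maximum rank to $R+1$). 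At $T'$ we have $u\in V(T')$ but $v_1,v_2\notin V(T')$, so $\{v_1,v_2\}\subseteq V_{T'}(u)$ forces $|V_{T'}(u)|\ge 2$, i.e.\ $u\in W_2(T')$, contradicting $W_2(T')=\varnothing$. If instead the algorithm terminates within this block, then $V(T')=V(G)$ and $u$ has no neighbor outside $T'$ at all, again contradicting the existence of $v_1,v_2$.

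The main obstacle is making the contiguous-rank-block argument fully rigorous, and in particular verifying that $u$ remains in $W_2$ for the whole block precisely because its two higher-rank neighbors cannot be absorbed by any earlier expansion, so the block cannot end with $W_2$ empty. This is exactly what rules out the tempting but incomplete shortcut of arguing that $u\in W_2(T^{p(u)})$ immediately forces $u$ to be expanded: the arbitrary processing order within $W_2$ could otherwise let $u$ drop out of $W_2$ before being chosen, and it is the block structure that closes this gap. A minor side point I would also verify is that a leaf cannot arise as the child of a $W_1$ expansion, since such a child has at least two outside neighbors, is the unique $W_2$ vertex immediately after the expansion, and is therefore expanded next.
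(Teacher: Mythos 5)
Your proof is correct, but its organization differs from the paper's. The paper fixes two offending neighbors $v,w$ of $u$ and classifies the expansion type of $p(u)$: it rules out $p(u)\in W_0(T_{p(u)})$ because $v,w\notin V(T_{p(u)})$ would leave $u$ with two unspanned neighbors, rules out $p(u)\in W_1(T_{p(u)})$ because $u$ would then be the unique $W_2$ vertex and be expanded immediately (this is exactly your ``minor side point''), and concludes $u\in W_2(T^{p(u)})$, hence $u\in W_2(T_u)$ and $r(v)=r(w)=r(u)$, a contradiction. You instead split on whether $u$ is internal in $T$ or a leaf of $T$, which disposes of the internal case in two lines, and you isolate the one genuinely delicate step --- that a tree vertex holding two unspanned neighbors of higher rank must eventually be expanded from $W_2$, despite the arbitrary processing order within $W_2$ --- as an explicit contiguous-rank-block argument: the maximum rank cannot advance past $r(u)$ while both neighbors are outside, since that advance requires $W_2=\varnothing$. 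The paper leaves precisely this step implicit: it passes from $u\in W_2(T^{p(u)})$ to a statement about $T_u$ without justifying that $u$ is expanded at all, and while both neighbors are still outside; this is covered only by the earlier unproved remark that the tree about to be expanded at $u$ contains no vertex of rank exceeding $r(u)$. So your route buys rigor at that point (and correctly diagnoses why the naive shortcut fails) at the cost of a longer write-up, while the paper buys brevity by folding the same monotonicity facts into its preliminary remarks.
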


\begin{proof}
Assume for contradiction that $v$ and $w$ are distinct neighbors of
$u$ in $G$ with
\begin{equation}
\label{equation:equation1}
r(u)<r(v)\leq r(w).
\end{equation} 
Thus, $u\ne a$. Let $p=p(u)$.  By Equation~\eqref{equation:equation1},
we have $V(T_p)\cap \{v,w\}=\varnothing$, implying $p\notin
W_0(T_p)$. Also, we have $p\notin W_1(T_p)$; or else $\{v,w\}\subseteq
V_{T_u}(u)$ implies $r(u)=r(v)=r(w)$, violating
Equation~\eqref{equation:equation1}.  Hence, $p\in W_2(T_p)$ and
$\{v,w\}\subseteq V_{T^p}(u)$,
implying $u\in W_2(T^p)$. 
By
Equation~\eqref{equation:equation1}, we have $V_{T_u}(u)\cap
\{v,w\}=\varnothing$. Thus, $u\in W_2(T_u)$, violating
Equation~\eqref{equation:equation1}.
\end{proof}

Let $F$ be the forest obtained from $T$ by deleting all edges $uv$
with $\rank{u}\ne \rank{v}$.  Thus, for any vertices $u$ and $v$ of $G$, we have $\rank{u}=\rank{v}$ if and only if $u$ and $v$ are connected in $F$.

\begin{lemma}
\label{lemma:lemma3}
If $uv$ is an edge of $G$ with $u\in V_2(F)$, then 
$\rank{u}\geq\rank{v}$.
\end{lemma}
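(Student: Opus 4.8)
The plan is to reduce the statement to a single structural fact: the hypothesis $u\in V_2(F)$ forces $u\in W_2(T_u)$. Once this is established, the inequality $\rank{u}\geq\rank{v}$ falls out of the rank characterization already recorded in the paragraph preceding Lemma~\ref{lemma:lemma1}.

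First I would analyze the edges of $F$ incident to $u$. By construction $F$ retains exactly the tree edges whose endpoints share a rank, so $d_F(u)$ counts the same-rank tree-neighbors of $u$: the parent $p(u)$ (present in $F$ precisely when $\rank{p(u)}=\rank{u}$) together with those children $v$ of $u$ satisfying $\rank{v}=\rank{u}$. The children of $u$ in $T$ are exactly the vertices of $V_{T_u}(u)$ added when $T$ is expanded at $u$, and the rank recurrence gives the key dichotomy: if $u\in W_2(T_u)$ then every such child inherits $\rank{v}=\rank{u}$ and there are $|V_{T_u}(u)|\geq 2$ of them, whereas otherwise $u$ has at most one child and that child has strictly larger rank. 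Consequently, when $u\notin W_2(T_u)$ the only possible same-rank neighbor of $u$ is its parent, so $d_F(u)\leq 1$. Contrapositively, $u\in V_2(F)$ implies $u\in W_2(T_u)$; in particular $u$ is an expanded vertex with $d_T(u)\geq 2$, so $u$ is not a leaf of $T$.

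Second, I would argue by contradiction, supposing $u$ has a neighbor $v$ in $G$ with $\rank{u}<\rank{v}$, and invoke the edge characterization stated before Lemma~\ref{lemma:lemma1}: either $uv\in E(T)$, forcing $u=p(v)$ with $u\in W_0(T_u)\cup W_1(T_u)$, or $uv\notin E(T)$, forcing $u$ to be a leaf of $T$. The first alternative contradicts $u\in W_2(T_u)$, since $W_0(T_u)$, $W_1(T_u)$, $W_2(T_u)$ are pairwise disjoint (they are distinguished by whether $|V_{T_u}(u)|$ is at least $2$ or equal to $1$); the second contradicts the fact that $u$ is not a leaf. Hence no such $v$ exists and $\rank{u}\geq\rank{v}$ for every neighbor $v$ of $u$.

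The main obstacle is the first step: correctly identifying which tree edges survive in $F$ and counting the same-rank neighbors of $u$, including the subtle bookkeeping of the parent edge. The crux is the dichotomy in the rank recurrence---a $W_2$-expansion copies the rank to all (at least two) new children, while every other expansion creates a single child of strictly larger rank---so that any vertex of $F$-degree at least two must have been a $W_2$-expansion vertex. Once this is isolated, the contradiction is immediate, and no special handling of the root $a$ is needed, since $a\in W_2(T_a)$ and all neighbors of $a$ receive rank $1=\rank{a}$.
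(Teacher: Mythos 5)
Your proposal is correct and follows essentially the same route as the paper's (very terse) proof: both hinge on the observation that $u\in V_2(F)$ forces $u\in W_2(T_u)$, so that a neighbor $v$ with $\rank{u}<\rank{v}$ would have to lie in $V_{T_u}(u)$ and hence receive rank $\rank{u}$, a contradiction. Your write-up merely makes explicit the degree-counting in $F$ and the disjointness of $W_0(T_u),W_1(T_u),W_2(T_u)$ that the paper leaves implicit.
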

\begin{proof}
Assume $r(u)<r(v)$ for contradiction, implying $v\notin V(T_u)$.
By $u\in V_2(F)$, we have $r(u)=r(v)$, contradiction.
\end{proof}

\begin{lemma}
\label{lemma:lemma4}
If $uv$ is an edge of $G$ with 
$u\in U$ and $v\in L(F)$, then
$\rank{u}>\rank{v}$.
\end{lemma}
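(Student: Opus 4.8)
The plan is to argue by contradiction, assuming $r(u)\le r(v)$ and deriving an impossibility from the bullet facts established just before the lemma. First I would translate the two hypotheses into the language of $F$. Since $u\in U$ is the unique vertex of its rank, and two vertices are $F$-connected exactly when they share a rank, the $F$-component of $u$ is the singleton $\{u\}$, so $d_F(u)=0$. In contrast $v\in L(F)$ has $d_F(v)=1$, so $v$ lies in an $F$-component with at least two vertices. Thus $u$ and $v$ sit in different components of $F$, which forces $r(u)\ne r(v)$; the claim therefore reduces to ruling out the case $r(u)<r(v)$.

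The main preliminary step is a structural description of the leaf $v$: I would show that its unique $F$-neighbor is its parent, i.e. $r(p(v))=r(v)$. The $F$-neighbors of $v$ are precisely the $T$-neighbors of the same rank, namely $p(v)$ together with any children of $v$ of rank $r(v)$. By the $W_2/W_1/W_0$ characterization, a child $c$ with $r(c)=r(v)$ can exist only when $v\in W_2(T_v)$, and in that case $v$ has at least two children, all of rank $r(v)$, giving $d_F(v)\ge 2$ and contradicting $v\in L(F)$. Hence no child of $v$ shares its rank, the single $F$-edge incident to $v$ must lead to $p(v)$, and so $r(p(v))=r(v)$; as a by-product this shows $v\ne a$.

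With $r(p(v))=r(v)$ available, I would finish by a short case split on whether $uv\in E(T)$, using the established dichotomy for an edge $uv$ of $G$ with $r(u)<r(v)$. If $uv\in E(T)$, then $u=p(v)$, so $r(u)=r(p(v))=r(v)$, contradicting $r(u)<r(v)$. If $uv\notin E(T)$, then $u$ is a leaf of $T$ with $r(p(u))=r(u)$, exhibiting $p(u)$ as a second vertex of rank $r(u)$ and contradicting $u\in U$. Both alternatives are impossible, so $r(u)>r(v)$.

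I expect the only genuine obstacle to be the preliminary step that pins the unique $F$-neighbor of a leaf of $F$ to its parent; once that is in place, the finish is pure bookkeeping driven by the two facts proved earlier. The remaining routine point is that $p(u)$ exists in the second case, which holds because $u\in U$ implies $u\ne a$: the initial tree $\{a\}$ satisfies $|V_{\{a\}}(a)|=d_G(a)\ge 2$, so $a\in W_2(T_a)$ is expanded first and shares rank $1$ with its children, whence $a\notin U$.
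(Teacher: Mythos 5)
Your proof is correct, and it uses the same toolkit as the paper's (contradiction from $r(u)\le r(v)$, the dichotomy for edges $uv$ of $G$ with $r(u)<r(v)$, and the $W_2/W_1/W_0$ rank characterization), but it pivots on the opposite endpoint of the edge. The paper first uses $u\in U$ to rule out the non-tree-edge branch of the dichotomy, concluding $u=p(v)$, and then tracks the algorithm one step forward ($T^u=T_v$) to classify $v$ as lying in $U\cup V_2(F)$; your version instead proves the contrapositive of that last step as a standalone fact --- a leaf of $F$ must receive its unique $F$-edge from its parent, since same-rank children only arise from a $W_2$-expansion and then come at least two at a time --- and then dispatches both branches of the dichotomy mechanically. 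The two arguments are logically equivalent, but yours has a small advantage in hygiene: the paper's step ``$T^u=T_v$'' tacitly assumes $v$ is itself an expanded vertex (so that $T_v$ is defined), whereas your count of the $F$-neighbors of $v$ needs no such assumption. Your closing remarks ($v\ne a$ because $a\in W_2(T_a)$ forces $d_F(a)\ne 1$, and $u\ne a$ because $a\notin U$) correctly patch the only places where a parent must exist.
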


\begin{proof}
Assume $r(u)\leq r(v)$ for contradiction.
By $u\in U$, we have $r(u)<r(v)$,
implying $v\notin V(T_u)$. By $u\in U$, we have $u=p(v)$, implying $T^u=T_v$.
Thus, $v\in U\cup V_2(F)$, violating $v\in L(F)$.
\end{proof}

Let $F_i$ with $i\geq 1$ be the $i$-th
largest connected component of $F$.  
Let $k$ denote the number of connected components of $F$ having
at least three vertices.
By $d_G(a)\geq 2$, we have $k\geq 1$.
If $uv$ is an edge of $T$ with $r(u)<r(v)$, then $u=p(v)$ and
$d_F(v)\ne 1$.
By definition of our algorithm, $|V(F_i)|\ne 2$ holds for each $i\geq 1$. 
\begin{itemize}
\item
  If $|V(F_i)|=1$, then $F_i$ has at most two incident edges in
  $T$. Thus, $|L(F)|\leq |L(T)|+k-1$.

\item
  If $|V(F_i)|\geq 3$, then $F_i$ has at most one vertex $u$ with
  $d_{F_i}(u)=2$. Thus, $|V(F_i)|\leq 2\cdot |L(F_i)|-1$.
\end{itemize}
We have $U=V_0(F)$
and
\[
n-|U|=\sum_{i=1}^k |V(F_i)| \leq
\sum_{i=1}^k\left(2\cdot |L(F_i)|-1\right) =2\cdot |L(F)|-k \leq
2\cdot |L(T)|+k-2.
\]
To show that $T$ is $2$-approximate, it remains to prove for an
arbitrary spanning tree $R$ of $G$
\begin{equation}
\label{equation:equation2}
|L(R)|\leq n-|U|-k+1.
\end{equation}

\begin{figure}
\centerline{\includegraphics[width=12cm]{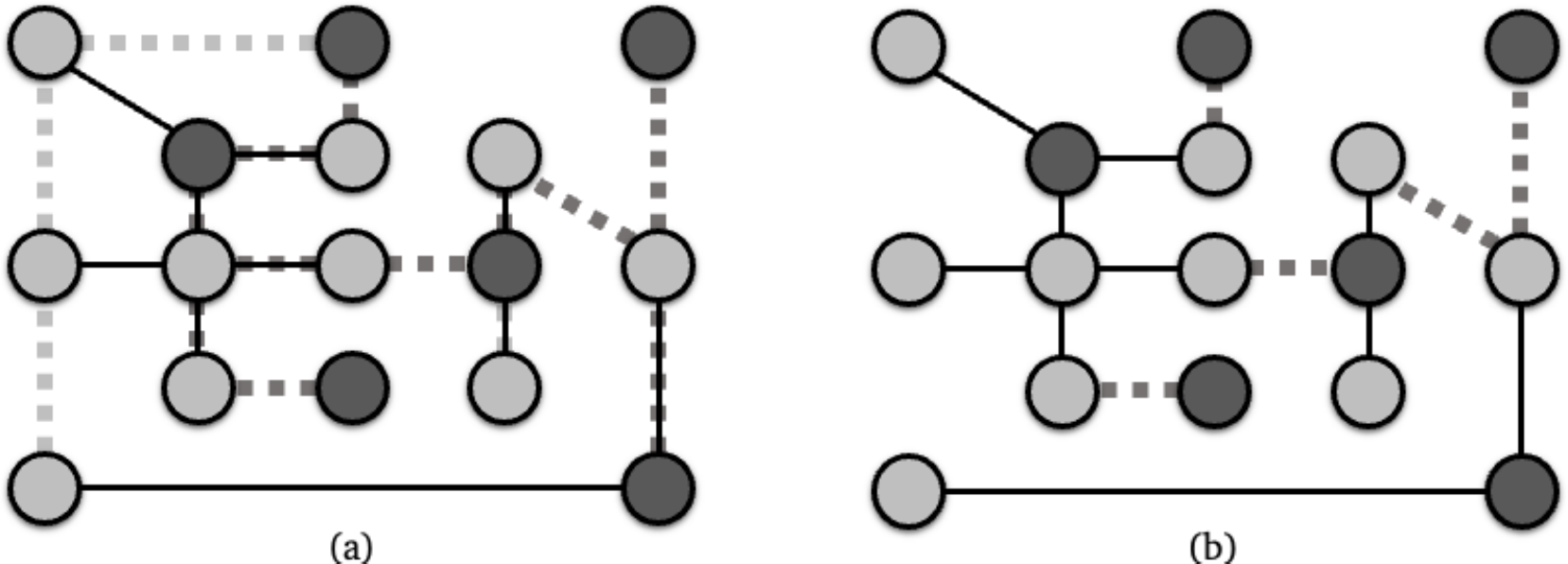}}
\caption{The solid edges form a spanning forest $F$ having $6$ connected components, three of which have at least three vertices.
Any two vertices  are connected in $F$ if and only if their ranks (as shown in Figure~\ref{figure:figure2}(b)) are identical.
The dark vertex in 
each connected component $F_i$ with $1\leq i\leq 6$ is a nonleaf vertex $f_i$ of $F_i$.
(a) The $15$ dotted edges form a spanning tree $R$.
The $11$ dark dotted edges form the minimal subtree $R'$ of $R$ that spans the dark vertices $f_1,\ldots,f_6$. 
(b) The $5$ dotted edges form a minimal forest $S$ with $|V(S)|=9$ of $R'$ such that $F\cup S$ is a spanning tree. $S$ has four connected components.}
\label{figure:figure3}
\end{figure}

Let each $f_i$ with $1\leq i\leq k+|U|$ be an arbitrary vertex of
$V(F_i)\setminus L(F_i)$.  Let $R'$ be the minimum subtree of $R$
that spans all $k+|U|$ vertices $f_i$.  We have $L(R')\cap L(F)=\varnothing$
and 
$L(R)\cap V(R')\subseteq L(R')$, implying 
\begin{equation}
\label{equation:equation3}
L({\color{red}S})\cap L(F)=\varnothing
\end{equation}
for any
minimal forest {\color{red}$S$} of $R'$ such that $S\cup F$ is a spanning tree of $G$.
By minimality of
$S$, 
the vertices of a connected component of $S$ have distinct ranks.
By minimality of $S$, each connected component of $S$ has at least two vertices and
a connected component of $S$ having $j$ vertices connects exactly $j-1$ connected components of $F$. Since $F$ has $k+|U|$ connected components, we have
%
$|V(S)|=|U|+k+t-1$,
where $t\geq 0$ is the number of connected components of $S$.  
See {\color{red}Figure~\ref{figure:figure3}} for an example.
Thus,
\[
|L(R)\setminus V(S)|\leq |V(R)\setminus V(S)|=n-|U|-k-t+1.
\]
We first claim that each connected component of $S$ contains at most
one leaf of $R$, implying $|L(R)\cap V(S)|\leq t$.
Thus, Equation~\eqref{equation:equation2} follows from
\[
|L(R)|=|L(R)\setminus V(S)|+|L(R)\cap V(S)|\leq n-|U|-k+1.
\]
It remains to prove the claim.  
Recall that all vertices of a connected component of $S$ have distinct ranks.
Assume for contradiction that a connected component of $S$ contains 
distinct leaves $u$ and $v$ of $R$ with
$r(u)<r(v)$. By $S\subseteq R'\subseteq R$, we have $\{u,v\}\subseteq L(S)$.
Let $P$ be the $uv$-path of $S$.
For any $\{x,y\}\in V(P)$, let $P[x,y]$ denote the $xy$-path of $P$.  
By Equation~\eqref{equation:equation3},
we have the following cases:

\medskip

\noindent
Case~1: $u\in V_2(F)$. Let $w$ be the neighbor of $u$ in $P$.
By~Lemma~\ref{lemma:lemma3}, 
we have $r(w)<r(u)<r(v)$.
Thus, there is a path $xyz$ of $P$ with $r(y)<r(x)<r(z)$, violating
Lemma~\ref{lemma:lemma2}.

\medskip

\noindent
Case~2: $u\in U$ and $v\in V_2(F)$.  
By Lemmas~\ref{lemma:lemma2} and~\ref{lemma:lemma3}, 
the ranks of the vertices in $P$ are monotonically increasing from $u$ to $v$.
Let $xy$ be the edge of $P$ with $V(P[u,x])\subseteq U$ and $y\notin U$.
By $x\in U$, we have $p(x)\in W_0(T_{p(x)})$.
Thus, $T^{p(x)}=T_x$, implying
$x=p(y)$ by $r(x)<r(y)$.
By $x=p(y)$ and $y\notin U$, we have $y\in V_2(F)$.
We have $y=v$; or else the rank of the neighbor of $y$ in $P[y,v]$ is
lower than $r(y)$ by Lemma~\ref{lemma:lemma3}. 
Lemma~\ref{lemma:lemma1} implies
$d_R(w)=2$ for each interior vertex $w$ of $P$. Thus, $a$ and $u$ are not connected in $R$, contradiction.

\medskip
\noindent
Case~3: $\{u,v\}\subseteq U$. 
\begin{itemize}
\item Case~3a:
A set $\{x,z\}=\{u,v\}$ and an interior vertex $y$ of $P$ with $y\in L(F)$ 
satisfy $V(P[y,z]-y)\subseteq U$.
Since the neighbor of $y$ in $P[y,z]$ is in $U$, 
its rank is higher than $r(y)$ by Lemma~\ref{lemma:lemma4}.
By Lemma~\ref{lemma:lemma2}, the ranks of the vertices of $P[x,y]$
are monotonically increasing from $x$ to $y$.
Thus, $V(P[x,y])\cap V_2(F)=\varnothing$
by Lemma~\ref{lemma:lemma3}.
By $x\in U$ and $y\in L(F)$, 
there is a
vertex $w$ of $P[x,y]$ in $L(F)$ such that it neighbor in $P[x,w]$ is in $U$, contradicting Lemma~\ref{lemma:lemma4}.

\item Case~3b:  Case~3a does not hold and there is an interior vertex $w$ of $P$ in $V_2(F)$.
By Lemmas~\ref{lemma:lemma2} and~\ref{lemma:lemma3},
the ranks of the vertices of $P[x,w]$ are monotonically increasing 
for each $x\in\{u,v\}$, implying that $w$ is the only vertex of $P$ in $V_2(F)$.
Since Case~3a does not hold, we have $V(P-w)\subseteq U$.
Let $y$ and $z$ be the neighbors of $w$ in $P$ with $r(y)<r(z)$.
Thus, $V(T_y)\cap \{w,z\}=\varnothing$.
By $y\in U$ and $V_{T^{p(y)}}(y)\ne\varnothing$, 
we have $T^{p(y)}=T_y$. Lemma~\ref{lemma:lemma1} implies
$d_G(y)=2$.
Thus, $y=p(w)$ and $T^y=T_w$, implying $r(w)=r(z)$, contradiction.

\item Case~3c: Neither of Cases~3a and~3b holds. Thus, $V(P)\subseteq U$.
Since $a$ and $u$ are connected in $R$ and 
$\{u,v\}\subseteq L(R)$,
there is an interior vertex $w$ of $P$ with $d_G(w)\geq 3$.
By Lemmas~\ref{lemma:lemma1} and \ref{lemma:lemma2}, the
ranks of the vertices of $P$ are monotonically increasing from $u$ (respectively, $v$) to $w$.
Let $x$ and $y$ be the neighbors of $w$ in $P$ with $r(x)<r(y)$. 
Let $z$ be a neighbor of $w$ in $G$ other than $x$ and $y$.
\begin{itemize}
\item If $r(x)<r(z)$, then $V(T_x)\cap\{y,z,w\}=\varnothing$. By $x\in U$ and $V_{T^{p(x)}}(x)\ne\varnothing$, we have $T^{p(x)}=T_x$. Lemma~\ref{lemma:lemma1} implies $d_G(x)=2$. Thus, $x=p(w)$ and $T^x=T_w$, implying $r(w)=r(y)$, contradiction.

\item If $r(z)<r(x)$, then $V(T_z)\cap \{x,y,w\}=\varnothing$. By $w\in U$, we have
$T^z=T_w$, implying $r(x)=r(w)=r(y)$, contradiction.
\end{itemize}
\end{itemize}

\begin{figure}
\centerline{\includegraphics[width=6cm]{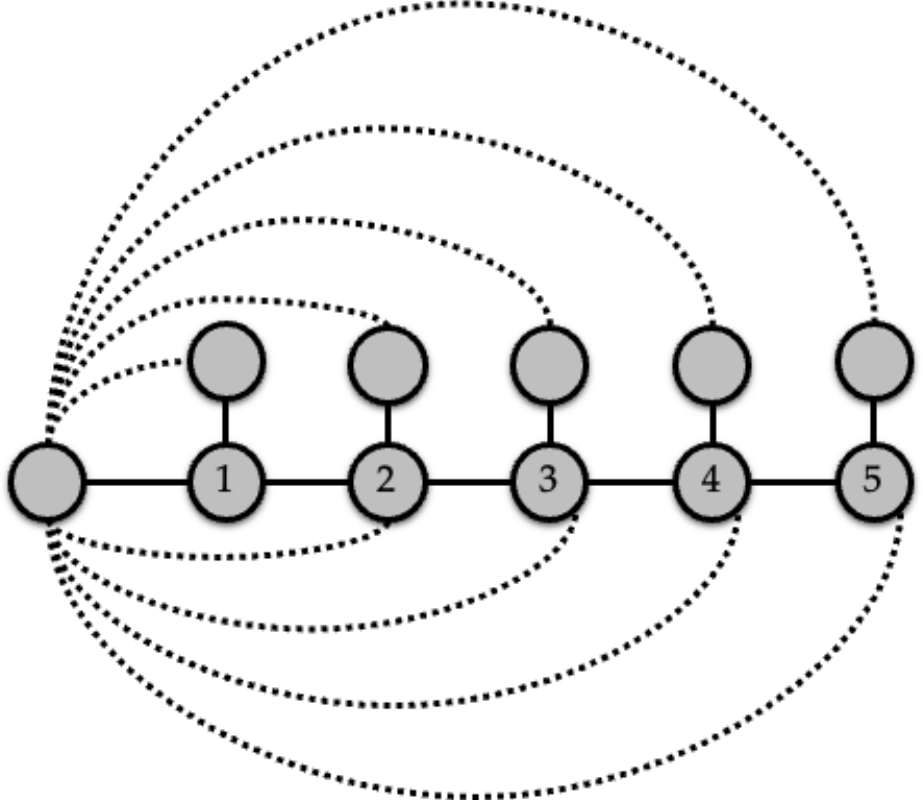}}
\caption{The spanning tree $T$ consists of the solid edges. The number of
each internal vertex of $T$ represents its expansion order.}
\label{figure:figure4}
\end{figure}

Our analysis is tight. Consider the graph $G$ in
Figure~\ref{figure:figure4} composed of the solid and the dashed lines. If
vertex $i$ is the $i$-vertex expanded, then the spanning tree $T$
computed by our algorithm is exactly the one consisting of the solid
lines with $|\leaf{T}|=6$.  However, $G$ has a spanning tree $R$
rooted at the highest-degree vertex with $|\leaf{R}|=10$, which is $2|\leaf{T}|-2$.

\section{A linear-time implementation}
\label{section:section4}
Our algorithm can easily be implemented to run in $O(m)$ time using
merely arrays and linked lists.  We maintain three (``waiting'') lists
$W_2$, $W_1$, and $W_0$ of the leaves of $T$ and a list $U_w$ for each
$w\in V(G)$ of the (``unspanned'') neighbors of $w$ in $V(G)\setminus
V(T)$.  To expand $T$ at a vertex $u\in V(T)$, we (i) insert each
vertex $v\in V_T(u)$ to the bottom of $W_2$ and let $U_w=U_w\setminus
\{v\}$ for each $vw\in E(G)$ and (ii) let $T=T\cup E_T(u)$.  Thus, it
takes overall $O(m)$ time to expand $T$ throughout the algorithm.  The
algorithm expands the initial $T=\{a\}$ at $a$ and enters the loop,
each of whose iterations runs the following instructions.  Note that,
according to the above implementation, 
expanding $T$ at a vertex $u\in V(T)$ with $|V_T(u)|=0$ does nothing.

\begin{itemize}
\item Case~1: $W_2\ne \varnothing$. Delete a vertex $u$ from the top
  of $W_2$.  If $|V_T(u)|=1$, then insert $u$ to the bottom of $W_1$.
  Otherwise, expand $T$ at $u$.

\item Case~2: $W_2=\varnothing$ and $W_1\ne\varnothing$. Delete a
  vertex $u$ from the top of $W_1$. If $|V_{T\cup E_T(u)}(v_T(u))|=1$,
  then insert $u$ to the bottom of $W_0$. Otherwise, expand $T$ at
  $u$.

\item Case~3: $W_2=W_1=\varnothing$ and $W_0\ne\varnothing$.  Delete a
  vertex $u$ from the bottom of $W_2$.  Expand $T$ at $u$.
\end{itemize}
Each condition involving $|V_T(u)|$ or $|V_{T\cup E_T(u)}(v_T(u))|$
can be determined in $O(1)$ time using the lists $U_w$.  Thus,
excluding the time for expanding $T$, the loop takes overall $O(n)$
time.  The three lists $W_2$, $W_1$, and $W_0$ enforce that the
vertex 
$u$ 
of $T$ are chosen in the order of $W_2(T)$, $W_1(T)$, and
$W_0(T)$.  Since each $W_i$ with $i\in\{0,1,2\}$ preserves the order
of its vertices that join $V(T)$ and and $W_0$ is processed in the
reversed order, the vertices in $W_0(T)$ are chosen in the reversed
order of their joining $V(T)$.  Hence, our algorithm is correctly
implemented above to run in $O(m)$ time.
Theorem~\ref{theorem:theorem1} is proved.

\section{Conclusion}
\label{section:section5}
We present a simple linear-time 2-approximation algorithm for the
problem of maximum-leaf spanning tree.  Our analysis for the
approximation ratio is tight.  However, as suggested by the example
shown in Figure~\ref{figure:figure4}, it is of interest to see if the
approximation ratio can be improved by starting our algorithm with a
vertex having the highest degree. The problem is MAX~SNP-complete, so
it is also of interest to have nontrivial lower bounds on the
approximation ratio.

\bibliographystyle{hilabbrv}
\bibliography{parking}
 
\end{CJK*}
\end{document}